  \providecommand\BibTeX{{%
    \normalfont B\kern-0.5em{\scshape i\kern-0.25em b}\kern-0.8em\TeX}}}
\newtheorem{lemma}{Lemma}
\newtheorem{theorem}{Theorem}
\newtheorem{definition}{Definition}
\newcommand{\fc} {\mathcal{C}}
\newcommand{\fs} {\mathcal{S}}
\newcommand{\ft} {\mathcal{T}}
\newcommand{\vote}{\texttt{vote}\xspace}
\newcommand{\latency} {good-case latency\xspace}
\newcommand{\timeout}{\texttt{timeout}\xspace}
\newcommand{\status}{\texttt{status}\xspace}
\newcommand{\vbbshort}{{psync-VBB}\xspace}
\newtcolorbox{mybox}[1][]{
enhanced,
colback=white,
boxsep=0pt,
#1
} 
\newcommand{\stitle}[1]{\vspace{0.5ex} \noindent\textsf{\textbf{#1}}}
\renewcommand{\paragraph}[1]{\smallskip\stitle{#1}}
\begin{document}

\title{Brief Note: Fast Authenticated Byzantine Consensus}
\titlenote{This is a complementary note of our previous paper~\cite{abraham2021goodcase} on the \latency of Byzantine broadcast.}
\author{Ittai Abraham}
\affiliation{%
  \institution{VMware Research}
  \country{Israel}
}
\email{iabraham@vmware.com}

\author{Kartik Nayak}
\affiliation{%
  \institution{Duke University}
  \country{USA}
}
\email{kartik@cs.duke.edu}

\author{Ling Ren}
\affiliation{%
  \institution{University of Illinois at Urbana-Champaign}
  \country{USA}
}
\email{renling@illinois.com}

\author{Zhuolun Xiang}
\affiliation{%
  \institution{University of Illinois at Urbana-Champaign}
  \country{USA}
}
\email{xiangzl@illinois.com}


\begin{abstract}
    Byzantine fault-tolerant (BFT) state machine replication (SMR) has been studied for over 30 years. Recently it has received more attention  due to its application in permissioned blockchain systems. A sequence of research efforts focuses on improving the commit latency of the SMR protocol in the common good case, including PBFT~\cite{castro1999practical} with $3$-round latency and $n\geq 3f+1$ and FaB~\cite{martin2006fast} with $2$-round latency and $n\geq 5f+1$.
    In this paper, we propose an authenticated protocol that solves $2$-round BFT SMR with only $n\geq 5f-1$ replicas, which refutes the optimal resiliency claim made in FaB for needing $n \geq 5f+1$ for $2$-round PBFT-style BFT protocols.
    For the special case when $f=1$, our protocol needs only $4$ replicas, and strictly improves PBFT by reducing the latency by one round (even when one backup is faulty).
\end{abstract}




\maketitle

\section{Introduction}

Byzantine fault-tolerant (BFT) state machine replication (SMR), which ensures all non-faulty replicas agree on the same sequence of client inputs to provide the client with the illusion of a single non-faulty server, is an important practical problem for building resilient distributed systems such as permissioned blockchain.
Most of the existing solutions to BFT SMR are leader-based, where a designated leader will drive consensus decisions for each view in the steady state, until it is replaced by the next leader via view-change due to malicious behavior or network partitions.
The design of BFT SMR usually focuses on optimizing the performance of the protocol under the common good case, when {\em an honest leader is in charge and the network is synchronous}. In particular, a sequence of research efforts aim at improving the latency of the good case~\cite{castro1999practical, kotla2007zyzzyva, martin2006fast, gueta2019sbft, hanke2018dfinity, chan2018pili, synchotstuff, abraham2020brief} for BFT SMR protocols.

This work also focuses on improving the good-case latency of the BFT SMR protocol under partial synchrony.
The most well-known BFT SMR protocol for partial synchrony is PBFT~\cite{castro1999practical}, which requires $3$ rounds of message exchange ($1$ round of proposing and $2$ rounds of voting) to commit a value in the good case, and has the optimal resilience of $n\geq 3f+1$~\cite{dwork1988consensus}.
For the good case when {\em the leader is honest but other replicas may be faulty} and the network is synchronous, FaB~\cite{martin2006fast} improves the latency to $2$ rounds ($1$ round of proposing and $1$ rounds of voting) with $n\geq 5f+1$ replicas.
Like PBFT, FaB only uses message authentication codes (MACs) instead of signatures in the steady state. 
The authors of the FaB paper claim that $n=5f+1$ is the best possible resilience for $2$-round BFT protocols.\footnote{Quote from Section 4 of FaB~\cite{martin2006fast}, ``Adding
signatures would reduce neither the number of communication steps nor the number of servers since FaB is already optimal in these two measures.''}

\paragraph{Summary of results.}
In this paper, we refute the above claim made in FaB~\cite{martin2006fast} by showing an authenticated BFT SMR protocol with $2$-round \latency that only requires $n\geq 5f-1$.
The BFT SMR protocol is an extension of our previous result for partially synchronous validated Byzantine broadcast~\cite{abraham2021goodcase}.
A special case of our SMR protocol is that, perhaps surprisingly, for the canonical example with $n=4$ and $f=1$ (note that $5f-1=3f+1$ in this case), we can design BFT SMR with 2-rounds in the good case.

\paragraph{Other Related Works.}
For the optimistic case when {\em all replicas are honest} and the network is synchronous,
Zyzzyva~\cite{kotla2007zyzzyva} and SBFT~\cite{gueta2019sbft} adds an optimistic commit path of $2$-round latency to PBFT with $n\geq 3f+1$ replicas.
Notice that the conditions for achieving the good case is much weaker than the optimistic case, since the good case  requires only the leader, instead of all replicas, to be honest. For example, with $n=4$ and $f=1$, our protocol has a  $2$-round latency even if one backup replica is malicious, while Zyzzyva or SBFT still requires 3-round latency in this case.

\section{Preliminaries}
\label{prelim}

The systems consists $n$ replicas numbered $1,2,...,n$. There exists at most $f$ Byzantine replicas with arbitrary behaviors controlled by an adversary. Rest of the replicas are called honest.
For simplicity of the presentation, we assume $n=5f-1$ in this paper, but the results apply to $n\geq 5f-1$.
The network model is the standard partial synchrony model~\cite{dwork1988consensus}, where the message delays are unbounded before an unknown Global Stable Time (GST). After GST, all messages between honest replicas will arrive within time $\Delta$.
The network channels are point-to-point, authenticated and reliable.
Unlike PBFT that can be implemented with Message Authentication Code (MAC) instead of digital signatures~\cite{castro1999practical}, our protocol relies on the PKI and digital signatures to detect equivocation.
We assume standard digital signatures and public-key infrastructure (PKI), and use $\langle m \rangle_i$ to denote a signed message $m$ by replica $i$.
In the paper, a message $m$ is valid if and only if $m$ is in the correct format and properly signed.
For simplicity, we assume the cryptographic primitives are ideal, to avoid the analysis of security parameters and negligible error probabilities.


        
        


In the problem of BFT SMR, clients send values to replicas, and the replicas provide the clients with the illusion of a single honest replica, by ensuring that all honest replicas agree on the same sequence of values. 

\begin{definition}[Byzantine Fault Tolerant State Machine Replication]
A Byzantine fault tolerant state machine replication protocol commits clients' values as a linearizable log akin to a single non-faulty server, and provides the following two guarantees.
\begin{itemize}[noitemsep,topsep=0pt]
    \item Safety. Honest replicas do not commit different values at the same log position. 
    \item Liveness. Each client value is eventually committed by all honest replicas.
    \item External Validity. 
    If an honest replica commits a value $v$, then $v$ is externally valid.
\end{itemize}
\end{definition}

Any client can send its value to at least $f+1$ replicas, and communicate with $f+1$ replicas to learn the committed value sequence proved by the corresponding commit certificate.
For most of the paper, we omit the client from the discussion and only focus on replicas.

We use the \latency metric~\cite{abraham2021goodcase} to measure the performance of our protocols, defined as follows.

\begin{definition}[Good-case Latency]
    The \latency of a BFT SMR protocol is the number of rounds needed for all honest replicas to commit, when the leader is honest and the network is synchronous.
\end{definition}

For instance, the classic PBFT~\cite{castro1999practical} has \latency of $3$ rounds with $n\geq 3f+1$ replicas, and FaB~\cite{martin2006fast}  has \latency of $2$ rounds with $n\geq 5f+1$ replicas.
We will propose protocols with \latency of $2$ rounds and $n\geq 5f-1$ for BFT SMR.

\section{$2$-round BFT Replication}
\label{sec:smr}
In this section, we propose an authenticated protocol with \latency of $2$ rounds that only needs $n\geq 5f-1$ replicas.
Following a recent line of work on chain-based BFT SMR~\cite{yin2019hotstuff, baudet2019state, synchotstuff, shrestha2020optimality} that commits a chain of blocks each containing a batch of clients' transactions, we use the following terminologies.

\paragraph{Block format, block extension and conflicting blocks.}
Clients' transactions (values) are batched into blocks, and the protocol outputs a chain of blocks $B_1,B_2,...,B_k,...$ where $B_k$ is the block at height $k$.
Each block $B_k$ has the following format $B_k=(h_{k-1}, k, txn)$ where $txn$ is a batch of new client transactions and $h_{k-1}=H(B_{k-1})$ is the hash digest of the previous block at height $k-1$.
We say that a block $B_l$ {\em extends} another block $B_k$, if $B_k$ is an ancestor of $B_l$ according to the hash chaining where $l\geq k$.
We define two blocks $B_l$ and $B'_{l'}$ to be {\em conflicting}, if they are not equal and do not extend on another. 
%
The block chaining simplifies the protocol in the sense that once a block is committed, its ancestors can also be committed.
    
Since the clients' transactions are batched into blocks, the BFT SMR protocol achieves {\em safety} if honest replicas always commit the same block $B_k$ for each height $k$, {\em liveness} if all honest replicas keep committing new blocks,
and {\em external validity} if all the committed blocks are externally valid.

\paragraph{Quorum certificate, timeout certificate, certificate ranking.}
    $\fc_w$ is a valid quorum certificate (QC) of view $w$ that {\em certifies} an externally valid block $B$ iff
    it consists of $\geq n-f=4f-1$ distinct signed \vote messages for block $B$ in the form of $\langle \vote, \langle B, w\rangle_{L_w} \rangle$ where $L_w$ is the leader of view $w$.  
    QCs and certified blocks are ranked first by the view numbers and then by the heights of the blocks, that is, QCs/blocks with higher views have higher ranks, and QCs/blocks with higher height have higher ranks if the view numbers are equal.

    $\ft_w$ is a valid timeout certificate (TC) of view $w$ that {\em locks} an externally valid block $B$ 
    iff 
    it consists of $\geq 4f-1$ signed \timeout messages in the form of $\langle \timeout, \langle B', w\rangle_{L_w} \rangle$ where $B'$ is some (possibly different) externally valid block or $\langle \timeout, \langle \bot , w\rangle \rangle$, and
    (1) 
    it contains $\geq 2f-1$ $\langle B', w \rangle_{L_w}$ where for each (possibly different) $B'$, $B$ equals or directly extends $B'$, and contains no block that conflicts $B$, or 
    (2) 
    it contains $\geq 2f$ $\langle B', w \rangle_{L_w}$ where for each (possibly different) $B'$, $B$ equals or directly extends $B'$, and no \timeout message from $L_{w}$.
    If multiple blocks satisfy the above conditions, we let $\ft_w$ lock the highest block.
    
    For example, let $B_1$ be the parent block of $B_2$. 
    If a TC contains $f$ signed \timeout messages for $B_1$, $f-1$ signed \timeout messages for $B_2$ and $2f$ signed \timeout messages for $\bot$, then it locks $B_2$ according to condition (1).
    If a TC contains $f$ signed \timeout messages for $B_1$, $f$ signed \timeout messages for $B_2$ and $2f-1$ signed \timeout messages for some conflicting block $B'$ and no \timeout message from the previous leader, then it locks $B_2$ according to condition (2).

    To bootstrap, we assume there is a certified genesis block $B_0$ of height $0$ that all honest replicas agree on when the protocol starts. The first leader of view $1$ will propose a block extending $B_0$, and every replica will vote for the block. For view-change, all honest replicas are assumed to have voted for $B_0$, and have the QC for $B_0$.

\subsection{Protocol $(5f-1)$-SMR}

\begin{figure}[h]
    \centering
    \begin{mybox}
    The protocol proceeds in view $w=1,2,...$, each with a leader $L_{w}$. 
    Each replica locally maintains the highest \timeout certificate $\ft_{high}$.
    The honest replicas will ignore any message for a block that is not externally valid.
    
\textbf{Steady State Protocol for Replica $i$}

Let $w$ be the current view number and replica $L_w$ be the current leader.
    
\begin{enumerate}
    
    \item\label{smr:step:propose} \textbf{Propose.} 
    The leader $L_w$ multicasts $\langle \texttt{propose}, \langle B_k, w\rangle_{L_w}, \fc, \fs \rangle_{L_w}$.
    If $B_k$ is not the first block proposed in view $w$, then $B_k$ is a new externally valid block extending the last block $B_{k-1}$ proposed by $L_w$, $\fc$ is the QC that certifies $B_{k-1}$, and $\fs=\emptyset$;
    Otherwise $B_k, \fc, \fs$ are specified in the {\em Status} step.
    
    \item\label{smr:step:vote} \textbf{Vote.} 
    Upon receiving a signed proposal $\langle \texttt{propose}, \langle B_k, w\rangle_{L_w}, \fc, \fs \rangle_{L_w}$ from the leader $L_w$,
    \begin{itemize}[itemsep=0pt,topsep=0pt]
        \item if $B_k$ is the first proposed block in view $w$, check if
        (1) $\fs$ is a valid TC of view $w-1$ that locks $B_k$ and $\fc$ is a valid QC of the parent block of $B_k$, 
        or (2) $\fs$ contains $4f-1$ valid \status messages of view $w-1$, $B_k$ is locked by the highest TC in $\fs$, and $\fc$ is a valid QC of the parent block of $B_k$;
        \item otherwise, check if $B_k$ extends the highest certified block known to the replica.
    \end{itemize}
    If one of the above condition is true, and the replica hasn't voted for any other height-$k$ block,
    multicast a \vote message in the form of $\langle \vote, \langle B_k, w\rangle_{L_w} \rangle_i$.

    \item\label{smr:step:commit} \textbf{Commit.}
    When receiving $4f-1$ signed \vote messages of view $w$ for the same block $B$, form a QC, forward the QC to all other replicas, and commit $B$ with all its ancestors blocks.

\end{enumerate}

\textbf{View-change Protocol for Replica $i$}
\begin{enumerate}[itemsep=0pt,topsep=0pt]
    
    \item\label{smr:step:timeout} \textbf{Timeout.}
    If less than $p$ valid blocks are committed within $(2p+2)\Delta$ time after entering view $w$, timeout view $w$ and stop voting for view $w$, and multicast $\langle \timeout, \langle B ,w\rangle_{L_w} \rangle_i$ where $B$ is the highest block voted in view $w$ (multicast $\langle \timeout, \langle \bot ,w\rangle \rangle_i$ if not voted for any).
    
    \item\label{smr:step:newview} \textbf{New View.} Upon receiving $4f-1$ valid \timeout messages of view $w-1$ that contains no conflicting blocks signed by $L_{w-1}$, or $4f-1$ valid \timeout messages from replicas other than $L_{w-1}$, perform the following:
    Forward these \timeout messages. 
    If the \timeout messages can form a \timeout certificate $\ft_{w-1}$ that locks a block, then update  $\ft_{high}=\ft_{w-1}$.
    Timeout view $w-1$ if haven't, and enter view $w$.
    Send a \status message in the form of $\langle \status, w-1, \fc, \ft_{high} \rangle_i $ to the leader $L_w$, where $\fc$ is the QC of the parent block of the block that $\ft_{high}$ locks.
    
    \item\label{smr:step:status} \textbf{Status.}
    After entering view $w$ and receiving $4f-1$ valid \status messages of view $w-1$,
    the leader $L_w$ sets the first new proposal block $B$, QC of the parent block $\fc$, and a proof $\fs$ as follows.
    \begin{itemize}[itemsep=0pt,topsep=0pt]
        
        \item If any valid TC $\ft$ of view $w-1$ locks a block $B'$, set $\fs=\ft$, $B=B'$ and $\fc$ to be the QC of the parent block of $B'$.
        
        \item Otherwise, set $\fs$ to be the set of $4f-1$ valid \status messages of view $w-1$ received, set $B$ to be the block locked by the highest $\ft$ in $\fs$, and set $\fc$ to be the QC of the parent block of $B$.
    \end{itemize}
\end{enumerate}

    \end{mybox}
    \caption{$(5f-1)$-SMR Protocol with \latency of $2$ rounds}
    \label{fig:smr}
\end{figure}

\paragraph{Protocol Description.}
Now, we present the protocol in Figure~\ref{fig:smr}, and briefly describe the protocol below. 
Each leader can keep proposing blocks until it is replaced by the next leader.
All committed blocks form a chain linked by hash digest and QC, where QC consists $n-f=4f-1$ votes of the parent block.
To propose the first block in the new view, the leader will wait for $4f-1$ valid \status messages.
If the \status messages contain a valid TC of the previous view, the leader proposes the block locked by this TC; otherwise, the leader proposes the block locked by the highest TC among \status messages.
Once the first block is voted by $n-f=4f-1$ replicas, the leader can propose the next block extending the first block together with the QC of the first block.
Then, whenever a block gets certified by a QC, it can be committed, and the leader can propose the next block extending this block.
When not enough progress is made in the current view (less than $p$ blocks committed within $(2p+2)\Delta$ time), the replica stops voting for the current view and multicast a \timeout message for the current view.
The \timeout message contains the highest block voted by the replica in the current view (if not voted then contains $\bot$).
The \timeout messages serve a similar purpose of the \timeout messages in the protocol $(5f-1)$-\vbbshort of our previous paper~\cite{abraham2021goodcase}, that is to form a TC that can lock a block for the next leader to propose.
The guarantee is that, if a block $B$ is committed at any honest replica, then any honest replica will have a TC that locks $B$ during the view-change, and no valid TC can lock on other conflicting blocks.
The replicas will enter the next view after receiving $n-f=4f-1$ \timeout messages, and send a \status message to the new leader containing its highest TC. Sending the highest TC ensures that even if the previous view has no progress, the highest TC in the \status messages will lock the highest block committed in the earlier views.

\subsection{Proof of Correctness}

\begin{lemma}\label{lem:smr:1}
    If an honest replica directly commits a block $B_k$ in view $w$, then any certified block $B_{k'}$ of view $w$ and height $k'\geq k$ must equal or extend $B_k$.
\end{lemma}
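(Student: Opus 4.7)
The strategy is to combine a single-height quorum-intersection claim with an induction on $k'$ along the chain structure of view-$w$ proposals.

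First I would establish single-height uniqueness in view~$w$: at any fixed height $j$, at most one block can carry a QC of view~$w$. Two such QCs for distinct blocks would share at least $(4f-1)+(4f-1)-(5f-1)=3f-1$ signers, of whom at least $2f-1\ge 1$ are honest. That honest replica would then have voted for two different height-$j$ blocks in view~$w$, which the voting rule in Step~2 of Figure~\ref{fig:smr} forbids. This settles the base case $k'=k$ outright: a certified view-$w$ block at height $k$ must coincide with $B_k$.

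For the inductive step $k'>k$, I would reason along $B_{k'}$'s parent chain. By Step~1, whenever $B_{k'}$ is not the first block of view~$w$, its proposal ships with $\fc$, a QC of the parent block $P$ formed from view-$w$ votes (since $L_w$ is also the in-view proposer of $P$). Thus $P$ is certified in view~$w$ at height $k'-1\ge k$, and the induction hypothesis yields that $P$ equals or extends $B_k$; since $B_{k'}$ extends $P$ by the hash chaining, we obtain the conclusion for $B_{k'}$.

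The main obstacle is the remaining case where $B_{k'}$ is itself proposed as the first block of view~$w$ at some height $k'>k$: now the parent's QC lives in a view strictly below~$w$, so the induction does not propagate through the chain. I would rule this configuration out by exploiting the locking semantics of TCs of view~$w-1$. To be a valid first-block proposal, $B_{k'}$ must be locked by a TC of view $w-1$; on the other hand, the view-$w$ chain that carries the directly committed $B_k$ originates at a first block that is itself locked by a TC of view $w-1$. An intersection on the underlying \timeout messages, paired with the non-conflict clause in the TC definition and in the spirit of the view-change analysis for \vbbshort in~\cite{abraham2021goodcase}, will force those two locked blocks onto a common chain. Combined with the single-height uniqueness established above, this pins $B_{k'}$ to the chain extending $B_k$. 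I expect the delicate point to be reconciling the two TCs when they lock blocks at different heights (one at some $k_0\le k$, the other at $k'$), so that the non-conflict condition propagates all the way through height~$k$ and identifies $B_k$ as the height-$k$ block along $B_{k'}$'s chain.
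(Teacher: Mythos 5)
Your base case and your main inductive step are essentially the paper's own argument: quorum intersection at a fixed height ($(4f-1)+(4f-1)-(5f-1)=3f-1$ common signers, hence at least one honest replica that voted for two different height-$k$ blocks, contradicting Step~\ref{smr:step:vote}), followed by walking down the parent chain of $B_{k'}$ through view-$w$ QCs until a certified block at height $k$ is reached and identified with $B_k$. The paper compresses the chain walk into a single sentence (``honest parties only vote for blocks that extend certified blocks''), but the content is the same.

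Where you diverge is the residual case in which $B_{k'}$ is itself the first block of view $w$ at some height $k'>k$. You are right that the naive induction stalls there, since the parent's QC then comes from a view strictly below $w$; this is a step the paper's proof glosses over. However, your proposed repair --- intersecting the view-$(w-1)$ TCs behind the two competing first-block proposals --- is left as a sketch and, as sketched, would not close the case: even if you force the two locked blocks onto a common chain, the ancestors of $B_{k'}$ at heights between the lock point and $k'$ are certified only in views below $w$, so identifying the height-$k$ ancestor of $B_{k'}$ with $B_k$ requires a cross-view single-height uniqueness claim that your view-$w$ quorum intersection does not supply (and if such a claim held unconditionally, the whole TC detour would be unnecessary). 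A more direct closure, consistent with the paper's intent, is either (a) to observe that a certified first block of view $w$ also has at least $3f-1$ honest view-$w$ voters, so two distinct certified ``first blocks'' of view $w$ (one at height $k'$, one at height at most $k$ at the root of $B_k$'s view-$w$ chain) would force $2f-1\geq 1$ honest replicas to accept two different first proposals in the same view, which the protocol forbids; or (b) to read the clause ``hasn't voted for any other height-$k$ block'' in Step~\ref{smr:step:vote} as view-independent, in which case the height-$k$ intersection argument already applies to the older-view ancestor of $B_{k'}$ at height $k$ and the case evaporates. You have correctly located the delicate point, but you have flagged it rather than proved it, so as written the proof is incomplete on exactly that branch.
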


\begin{proof}
    Suppose $k'=k$. Since any committed or certified block need $4f-1$ votes, if two different blocks of height $k$ are both certified, then by quorum intersection, there should exist at least $(4f-1)+(4f-1)-(5f-1)=3f-1>f$ Byzantine parties, which is a contradiction. 
    Now suppose $k'>k$. Since $B_{k'}$ does not extend $B_k$, there must exist a certified block $B_k'$ that $B_{k'}$ extends since honest parties only vote for blocks that extend certified blocks, and $B_k'$ and $B_k$ conflicts each other. However, such certified $B_k'$ cannot exist by earlier argument, and thus 
    any certified block $B_{k'}$ of view $w$ and height $k'\geq k$ must equal or extend $B_k$.
\end{proof}

\begin{lemma}\label{lem:smr:2}
    If block $B_k$ is the highest block certified in view $w$, then no valid TC of view $w$ can lock any block that conflicts $B_k$, and any honest replica that enters view $w+1$ has a valid TC of view $w$ that locks $B_k$ or some $B_{k+1}$ directly extending $B_k$.
\end{lemma}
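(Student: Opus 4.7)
The plan is to prove both parts by quorum intersection, using Lemma~\ref{lem:smr:1} to pin down the blocks that honest voters of $B_k$ can carry in their view-$w$ timeouts.

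First I would observe from Lemma~\ref{lem:smr:1} that $B_k$ is the unique certified block at height $k$ in view $w$, and that no block at any height $>k$ in view $w$ is certified (else it would outrank $B_k$). Hence no honest replica votes in view $w$ at height $\geq k+2$, because such a vote requires a certified parent at height $\geq k+1$ which does not exist; and any honest vote at height $k+1$ in view $w$ must be for a child of $B_k$, since the parent must be certified at height $k$ and the only such block is $B_k$. Consequently every honest replica in the signer set $V$ of $B_k$'s QC has its highest view-$w$ vote equal to $B_k$ or a direct child of $B_k$, so its view-$w$ timeout, if any, carries $B_k$ or a block directly extending $B_k$.

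For the first claim I would argue by contradiction: suppose some valid TC $\ft_w$ of view $w$ with signer set $T$ locks a block $B'$ conflicting $B_k$. Since $|V|,|T|\geq 4f-1$ and $n=5f-1$, inclusion-exclusion gives at least $2f-1$ honest replicas in $V\cap T$, each contributing a timeout whose block conflicts $B'$ and every direct child of $B'$; this already violates the ``no block conflicts $B$'' clause of case~(1). For case~(2) I would split on $L_w$. If $L_w$ is honest it proposes a single chain containing $B_k$, so it never signs $B'$ or a child of $B'$ in view $w$; since every non-$\bot$ view-$w$ timeout references a block signed by $L_w$, no timeout in $\ft_w$ can carry $B'$ or a direct child, contradicting the $\geq 2f$ requirement. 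If $L_w$ is Byzantine with $L_w\notin T$, then at most $f-1$ other replicas are Byzantine, and redoing inclusion-exclusion over the universe $\{1,\dots,n\}\setminus\{L_w\}$ of size $5f-2$ yields at least $2f$ honest replicas in $V\cap T$; combined with the $\geq 2f$ timeouts for $B'$ (disjoint, since the blocks conflict) this exceeds $|T|=4f-1$.

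For the second claim, consider any honest $r$ entering view $w+1$, which by the New View step follows from $r$ collecting $4f-1$ view-$w$ timeouts $T$ satisfying either (a) no conflicting signed blocks, or (b) $T\subseteq\{1,\dots,n\}\setminus\{L_w\}$. The same intersection gives at least $2f-1$ honest replicas in $V\cap T$ whose timeouts carry $B_k$ or a direct child. Under (a) the no-conflicting clause is immediate, so case~(1) of the TC definition is met; under (b) with $L_w$ honest, $L_w$'s single-chain proposals preclude conflicts among the blocks in $T$, so case~(1) again suffices; and under (b) with $L_w$ Byzantine, the refined count gives $\geq 2f$ such honest timeouts, which together with the no-$L_w$-timeout clause satisfies case~(2). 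In every branch the TC formed from $T$ is a valid view-$w$ TC locking $B_k$, which $r$ records in $\ft_{high}$.

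The main obstacle I anticipate is case~(2) of the first claim when $L_w$ is honest, where pure counting is tight. The resolution is that a valid view-$w$ timeout must reference a block signed by $L_w$, and an honest $L_w$ never signs a block conflicting $B_k$; this eliminates every candidate timeout for $B'$ or its children. Everything else is standard quorum-intersection reasoning combined with Lemma~\ref{lem:smr:1}.
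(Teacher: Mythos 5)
Your proposal is correct and follows essentially the same route as the paper's proof: establish that the $\geq 3f-1$ honest voters of $B_k$ can only carry $B_k$ or a direct child in their view-$w$ timeouts, then handle both claims by quorum-intersection counting with a case split on the two TC conditions and on whether $L_w$ is honest or Byzantine. The only difference is cosmetic — you phrase the counts as explicit inclusion-exclusion over $V\cap T$ (and over the universe excluding $L_w$ in the Byzantine-leader case), while the paper bounds the number of conflicting timeouts directly; the arithmetic is identical.
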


\begin{proof}
    By default, any message discussed below is of view $w$.
    Since $B_k$ is certified in view $w$, at least $3f-1$ honest replicas voted for $B_k$. Since $B_k$ is the highest certified block in view $w$, no honest replica voted for any block $B_{k+2}$ (but they may vote for some $B_{k+1}$ that extends $B_k$).
    Hence, at least $3f-1$ honest replicas include $B_k$ or some $B_{k+1}$ that extends $B_k$ in their \timeout messages, but not any $B'$ that conflicts $B_k$.
    
    By definition, a valid TC $\ft$ that locks $B$
    iff 
    it consists of $\geq 4f-1$ signed \timeout messages in the form of $\langle \timeout, \langle B , w\rangle_{L_w} \rangle$ where $B$ is some externally valid block or $\langle \timeout, \langle \bot , w\rangle \rangle$, and
    (1) 
    it contains $\geq 2f-1$ $\langle B', w \rangle_{L_w}$ where for each (possibly different) $B'$, $B$ equals or directly extends $B'$, and contains no block that conflicts $B$, or 
    (2) 
    it contains $\geq 2f$ $\langle B', w \rangle_{L_w}$ where for each (possibly different) $B'$, $B$ equals or directly extends $B'$, and no \timeout message from $L_{w}$.
    
    First we prove that no valid TC of view $w$ can lock a block $B'$ that conflicts $B_k$.
    Suppose that there exists a valid TC $\ft$ of view $w$ that locks a block $B'$ that conflicts $B_k$. 
    Condition (1) cannot be true: 
    Since $3f-1$ honest replicas include blocks that conflict $B'$ in their \timeout messages, $\ft$ cannot include these signatures, which implies there need to be at least $3f-1+4f-1=7f-2>n$ replicas and is impossible.
    Condition (2) also cannot be true: 
    If $L_{w}$ is honest, then no conflicting block can be signed by $L_{w}$.
    If $L_{w}$ is Byzantine, then $\ft$ contains at most $f-1$ signatures from Byzantine replicas since the leader $L_w$ is excluded. Since at most $(4f-1)-(3f-1)=f$ honest replicas include $B'$ in \timeout, at most $f+f-1=2f-1$ signatures can be on $B'$.
    Therefore, there exists no valid TC $\ft$ of view $w$ that locks any block $B'$ that conflicts $B_k$.

    Now we prove that any honest replica that enters view $w+1$ has a valid TC of view $w$ that locks $B_k$ or some $B_{k+1}$ directly extending $B_k$.
    Consider any honest replica that enters view $w+1$. According to Step~\ref{smr:step:newview} of view-change, the replica receives either $4f-1$ valid \timeout messages of view $w$ that contain no conflicting blocks signed by $L_w$, or $4f-1$ valid \timeout messages from replicas other than $L_{w}$.
    There are two cases.
    If the received $4f-1$ \timeout messages contain no conflicting blocks signed by $L_w$, since at least $4f-1-2f=2f-1$ of them contain $B_k$ or some $B_{k+1}$ directly extending $B_k$, condition (1) for locking is satisfied.
    If the received $4f-1$ \timeout messages are from replicas other than $L_w$, if $L_w$ is honest then condition (1) also holds;
    Otherwise, if $L_w$ is Byzantine, then the set of $4f-1$ \timeout messages contains at most $f-1$ \timeout messages from Byzantine replicas since the leader $L_w$ is excluded. Since at most $(4f-1)-(3f-1)=f$ honest replicas may include conflicting blocks in \timeout, the set of $4f-1$ \timeout messages includes at least $(4f-1)-f-(f-1)=2f$ \timeout that contain $B_k$ or some $B_{k+1}$ directly extending $B_k$. Then condition also holds.
    Hence, any honest replica that enters view $w+1$ has a valid TC of view $w$ that locks $B_k$ or some $B_{k+1}$ directly extending $B_k$.
\end{proof}

\begin{lemma}\label{lem:smr:3}
    If an honest replica directly commits a block $B$, then any certified block that ranks no lower than $B$ must equal or extend $B$.
\end{lemma}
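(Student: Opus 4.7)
The plan is induction on $w'$, the view in which $B'$ is certified; note that ``ranks no lower than $B$'' forces $w' \geq w$. The base case $w' = w$ is immediate from Lemma~\ref{lem:smr:1}, since same-view ranking requires the height of $B'$ to be at least $k$. For $w' > w$, I would carry a joint invariant over every intermediate view $v$ with $w \leq v < w'$: (a) every certified block of view $v$ at height $\geq k$ equals or extends $B$, and (b) every valid TC of view $v$ locks a block that lies on the chain of $B$ (either equal to $B$, extending $B$, or a strict ancestor of $B$). The strengthened base case at $v = w$ is supplied by Lemma~\ref{lem:smr:1} (for certified blocks) together with Lemma~\ref{lem:smr:2} applied to the highest view-$w$ certified block $B^{\star}$, which equals or extends $B$; in particular, every honest replica entering view $w+1$ holds a view-$w$ TC locking $B^{\star}$, giving an $\ft_{high}$ on the chain of $B$.

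To advance from view $v$ to view $v+1$, I would argue that the first block $B^{f}_{v+1}$ voted by any honest replica in view $v+1$ lies on the chain of $B$. By the vote rule, $B^{f}_{v+1}$ is either locked by a TC of view $v$ (constrained by IH(b)) or by the highest TC inside $4f-1$ view-$v$ status messages. In the latter case, at least $3f-1$ of those status messages come from honest replicas whose $\ft_{high}$ locks a block on the chain of $B$ by the propagating invariant, and any TC a Byzantine replica contributes is itself bounded by IH(b) applied at views up to $v$. The subsequent blocks in view $v+1$ extend $B^{f}_{v+1}$ by the propose rule; combined with the across-view single-vote-per-height discipline that any honest replica who voted for $B$ in view $w$ never votes for a conflicting height-$k$ block in later views, this yields (a) for view $v+1$. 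For (b), I would mirror the case analysis in Lemma~\ref{lem:smr:2}: honest view-$(v+1)$ timeouts carry only $\bot$ or blocks on the chain of $B$, so a TC locking a block conflicting with $B$ would either violate the ``no conflicting block'' clause of condition~(1) or would need more than the $f$ (respectively $f-1$ when excluding $L_{v+1}$) Byzantine signatures allowed by condition~(2).

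The main obstacle I anticipate is making the ``highest TC'' propagation through $\ft_{high}$ watertight: within a single view, one must pin down a ranking of TCs for which the honest replicas' monotonically updated $\ft_{high}$ dominates any Byzantine-attainable TC that would lock a block conflicting with $B$. A secondary subtlety is the across-view single-vote-per-height discipline, which is what ultimately forces any chain rooted at a strict ancestor of $B$ in a later view to either halt below height $k$ or pass through $B$ itself, ruling out certification of a conflicting height-$k$ block and closing the inductive step.
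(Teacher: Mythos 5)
Your inductive skeleton---base case from Lemma~\ref{lem:smr:1}, then a per-view invariant about certified blocks and valid TCs seeded by Lemma~\ref{lem:smr:2}---is the same as the paper's. The divergence, and the gap, is in your invariant~(b). The paper carries the \emph{stronger} claim that every valid TC of view $\geq w$ locks only blocks that equal or extend $B$, together with the fact that every honest replica entering view $w+1$ holds a view-$w$ TC locking $B$ itself; this forces the first block voted in every later view to equal or extend $B$, and subsequent blocks extend it by the propose/vote rules. You weaken (b) to permit a TC that locks a strict ancestor $A$ of $B$, and with that weakening the step from view $v$ to $v+1$ does not close on its own: a leader of view $v+1$ may present a TC locking $A$ and a QC of $A$'s parent, re-propose $A$ as the first block of the view, and then extend it with blocks conflicting with $B$. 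Step~\ref{smr:step:vote} accepts all of these---once $A$ is re-certified in view $v+1$ its QC outranks $B$'s view-$w$ QC, so the conflicting children extend the ``highest certified block'' known to honest replicas.

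You plug this hole with an ``across-view single-vote-per-height discipline,'' but the protocol does not supply one: the clause ``hasn't voted for any other height-$k$ block'' has to be read per view. The paper's own Lemma~\ref{lem:smr:1} is stated and proved only within a single view precisely because quorum intersection on \vote messages is a per-view argument (if the rule were across views, Lemma~\ref{lem:smr:3} would follow from quorum intersection alone and the entire TC machinery would be unnecessary for safety); moreover an across-view rule would deadlock the view change whenever honest votes at some height are split between two uncertified blocks, since neither could subsequently collect $4f-1$ votes. So the mechanism you rely on to stop a chain rooted at a strict ancestor of $B$ from crossing height $k$ is not available. The repair is to prove the stronger form of (b) directly, as the paper does: the counting argument of Lemma~\ref{lem:smr:2} (at least $3f-1$ honest \timeout messages carry $B$ or a child of $B$ and none carries a conflicting block) excludes any TC locking a conflicting block, and the fact that every honest replica's $\ft_{high}$ from view $w$ onward locks $B$ or a descendant forces the block chosen in Step~\ref{smr:step:status}---whether taken from a single valid TC or from the highest TC among $4f-1$ \status messages---to equal or extend $B$.
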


\begin{proof}
    Suppose an honest party $h$ directly commits a block $B$ in view $w$.
    
    First, for any certified block of view $w$, the lemma is true by Lemma~\ref{lem:smr:1}.
    
    Now we prove the lemma for view $>w$ by first proving that in view $>w$ (1) the honest replicas only vote for blocks that equal or extend $B$, and (2) any valid TC of view $\geq w$ only locks blocks that equal or extend $B$.
    We prove by induction on the view number.
    Consider the base case of view $w+1$. By Lemma~\ref{lem:smr:1} and \ref{lem:smr:2}, any valid TC of view $w$ can only lock a block that equals or extends $B$, and any honest replica that enters view $w+1$ has a valid TC
    of view $w$ that locks $B$ or some block directly extending $B$. 
    Therefore, any set of $4f-1$ valid \status messages of view $w$ must contain a valid TC of view $w$ that locks a block that equals or extends $B$, and honest replicas will only for blocks that equal or extends $B$ according to Step~\ref{smr:step:vote} of the steady state protocol.
    Assume the induction hypothesis that any honest replica only votes for a block that equals or extends $B$ in view $w+1,...,k-1$, and any valid TC of view $w,...,k-2$ only locks blocks that equal or extend $B$. 
    Then, any valid TC of view $k-1$ can only lock a block that equals or extends $B$ by a proof similar to that of Lemma~\ref{lem:smr:2}.
    Since any valid TC of view $w,...,k-1$ can only lock a block that equals or extends $B$, and any honest replica that enters view $w+1$ has a valid TC
    of view $w$ that locks $B$, any honest replica in view $k$ will only vote for blocks that equal or extends $B$ according to Step~\ref{smr:step:vote} of the steady state protocol.
    Therefore, the claim is true by induction, which also implies that any certified block that ranks no lower than $B$ must equal or extend $B$.
\end{proof}

\begin{theorem}[Safety]\label{thm:smr:safety}
    Honest replicas always commit the same block $B_k$ for each height $k$.
\end{theorem}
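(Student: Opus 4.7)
The plan is to derive a contradiction from two honest replicas committing distinct blocks at the same height, by reducing the situation to a comparison of two directly committed blocks and then invoking Lemma~\ref{lem:smr:3}. The commit rule in Step~\ref{smr:step:commit} shows that every commit of a block $B$ at height $k$ happens as a consequence of some block $B^\star$ being directly committed (that is, the committing replica obtained a QC for $B^\star$ in some view), with $B$ equal to or an ancestor of $B^\star$ along the hash chain.

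Concretely, I would assume for contradiction that two honest replicas commit distinct blocks $B$ and $B'$ at the same height $k$. Let $B^\star$ be the directly committed block witnessing the commit of $B$ (so $B^\star$ equals $B$ or extends $B$, and $B^\star$ is directly committed in some view $w$); define $B'^\star$ similarly with view $w'$. Without loss of generality assume $B'^\star$ ranks no lower than $B^\star$ (either $w' > w$, or $w' = w$ and the height of $B'^\star$ is at least that of $B^\star$; this covers all cases after a relabeling). Since $B'^\star$ is directly committed, in particular it is certified by a QC.

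Now I would apply Lemma~\ref{lem:smr:3} to the direct commit of $B^\star$ in view $w$: every certified block ranking no lower than $B^\star$ must equal or extend $B^\star$. In particular $B'^\star$ equals or extends $B^\star$. Walking the hash chain down to height $k$, the unique ancestor of $B'^\star$ at height $k$ must equal the unique ancestor of $B^\star$ at height $k$, which are $B'$ and $B$ respectively. Hence $B = B'$, contradicting the assumption.

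I expect the only subtlety to be handling the ranking cleanly when $w = w'$: in that case the WLOG step needs the observation that either $B^\star$ is an ancestor of $B'^\star$ or vice versa, which is already built into Lemma~\ref{lem:smr:1} combined with Lemma~\ref{lem:smr:3}. Beyond that, the argument is a short case analysis that only uses Lemma~\ref{lem:smr:3} and the block-chaining structure, so no new invariants need to be established.
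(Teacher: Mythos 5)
Your proposal is correct and follows essentially the same route as the paper's proof: reduce each commit at height $k$ to a witnessing directly committed block, order the two witnesses by rank without loss of generality, apply Lemma~\ref{lem:smr:3} to conclude the higher-ranked certified witness extends the lower-ranked directly committed one, and read off equality of the height-$k$ ancestors. The subtlety you flag about the $w=w'$ case is not actually an issue, since the rank order (view first, then height) is total and the WLOG relabeling always goes through.
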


\begin{proof}
    Suppose two blocks $B_k$ and $B_k'$ are committed at height $k$ at any two honest replicas. Suppose $B_k$ is committed due to $B_l$ being directly committed in view $v$, and $B_k'$ is committed due to $B_{l'}'$ being directly committed in view $v'$.
    Without loss of generality, suppose $v\leq v'$, and for $v=v'$, further assume that $l\leq l'$. 
    Since $B_l$ is directly committed and $B_{l'}$ is certified and ranks no lower than $\fc_v(B_l)$,
    by Lemma \ref{lem:smr:3}, $B_{l'}$ must equal or extend $B_l$. Thus, $B_k'=B_k$.
\end{proof}

\begin{theorem}[Liveness]\label{thm:smr:liveness}
    All honest replicas keep committing new blocks.
\end{theorem}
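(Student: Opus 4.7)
The plan is to show liveness by a standard partial-synchrony argument: (i) after GST, the view-change mechanism synchronizes honest replicas into the same view within bounded time; (ii) eventually an honest leader is selected after GST; and (iii) such an honest leader drives enough commits that the view does not time out, establishing steady progress. Together with the rotation of leaders, this ensures that every honest replica keeps committing new blocks.

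First, I would establish a view synchronization lemma. Suppose the first honest replica enters view $w$ at time $t \ge \text{GST}$. Because entering view $w+1$ only requires $4f-1$ valid \timeout messages, and because any honest replica that times out view $w$ multicasts its \timeout message, I would argue by the standard forwarding-on-\timeout argument that within $O(\Delta)$ time after the first honest replica enters view $w+1$, all honest replicas enter view $w+1$. The adaptive timeout $(2p+2)\Delta$ implies that once the leader is malicious (or the network was not yet synchronous), view $w$ times out at every honest replica within bounded time, so honest replicas cannot be stuck in a view indefinitely after GST.

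Next, I would show that when an honest leader $L_w$ takes over in view $w$ after GST and after view synchronization, it can commit blocks at a rate fast enough to beat the timeout. The leader collects $4f-1$ valid \status messages within $\Delta$ time; by Lemma~\ref{lem:smr:2} and Lemma~\ref{lem:smr:3}, it can always assemble a first proposal $(B, \fc, \fs)$ that satisfies the voting condition at every honest replica: either $\fs$ contains a valid TC locking some block $B$ (in which case the leader proposes $B$), or $\fs$ contains $4f-1$ \status messages whose highest TC locks $B$. In either case, all at least $4f-1$ honest replicas will vote, so $B$ is certified in one additional round. From then on, the honest leader proposes each next block together with the QC of the previous one, and every honest replica votes since each proposal extends the highest certified block. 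Thus an honest leader produces one committed block every $2\Delta$ in the good case, and within $(2p+2)\Delta$ time at least $p$ blocks are committed (with slack for the first block), so the timeout condition is never triggered.

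The main obstacle is the first-proposal step: I must verify that with $n \ge 5f-1$, the $4f-1$ \status messages collected by an honest leader are always sufficient to produce a proposal that passes the vote check at every honest replica, and that the block chosen is externally valid. The second is addressed by the external-validity requirement on \timeout and \status inputs (the protocol ignores non-valid blocks throughout), and the first is addressed by a case analysis identical to that in Lemma~\ref{lem:smr:2}: either the highest TC included in $\fs$ already locks a block that equals or extends every previously committed block (by Lemma~\ref{lem:smr:3}), or the leader simply forwards $\fs$ itself and each honest replica locally verifies the same lock, so the voting rule in Step~\ref{smr:step:vote} is met. Finally, since leaders rotate among replicas and at least one honest leader is encountered in every $f+1$ consecutive views, honest replicas keep committing new blocks, which yields liveness.
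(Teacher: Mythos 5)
Your proposal is correct and follows essentially the same route as the paper's proof: after GST an honest leader collects $4f-1$ \status messages, forms a first proposal that every honest replica votes for, and then commits one block per $2\Delta$, so the $(2p+2)\Delta$ timeout is never triggered; if the leader is faulty, timeouts propagate and a new view is entered. You are somewhat more explicit about view synchronization and the validity of the first proposal than the paper, but the decomposition and key steps are the same.
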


\begin{proof}
    After GST, when the leader is honest, all honest replicas will keep committing new blocks and no honest replica will send \timeout message.
    Since the leader is honest, it will set the first proposal block according to Step~\ref{smr:step:status} of view-change, and all honest replicas will vote for the block according to Step~\ref{smr:step:vote} of the steady state protocol. For later blocks, since the leader will extend the last proposed block, all honest replicas will vote according to Step~\ref{smr:step:vote} of the steady state protocol. 
    The time window $(2p+2)\Delta$ is sufficient for an honest replica to commit $p$ blocks, since the leader may enter the view at most $\Delta$ time later, then wait for $\Delta$ to receive the \status messages. After that, each proposed block takes $2\Delta$ to be committed, since after $\Delta$ time the block is received at all honest replicas, and after another $\Delta$ time, all the votes are received by all honest replicas, leading to the commit.
    Hence, no honest replica will send \timeout message, and the leader will not be replaced.
    
    Otherwise, if the network is asynchronous or the leader is Byzantine, the honest replicas may not commit enough blocks and thus send \timeout messages. When $f+1$ honest replicas send \timeout messages and thus stop voting for any block in this view, no new blocks can be certified and all $4f-1$ honest replicas will eventually send \timeout messages and enter the next view.
    Eventually, after GST and an honest leader is elected, all honest replicas will keep committing new blocks.
\end{proof}

\begin{theorem}[External Validity]
    The block committed by any honest replicas is externally valid.
\end{theorem}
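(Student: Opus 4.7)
The plan is to show that any block an honest replica commits traces back through a chain of certifications, each of which required honest replicas to vote, and honest replicas vote only for externally valid blocks. The overall argument is a short structural induction along the block chain.

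First, I would establish the invariant that every certified block is externally valid. By definition, a valid QC consists of $4f-1$ signed \vote messages, so it includes at least $(4f-1)-f=3f-1 \geq 1$ honest votes. By the Vote step of the steady state protocol, and by the explicit assumption in Figure~\ref{fig:smr} that ``honest replicas will ignore any message for a block that is not externally valid,'' an honest replica only emits a \vote for an externally valid block. Hence any block carrying a valid QC is externally valid. The same observation applies to the initial proposal of a view: the leader either supplies a TC locking a block (which, by the locking definition, must lock an externally valid block) or a set of \status messages whose highest TC does, and the honest voters in Step~\ref{smr:step:vote} additionally verify external validity before voting.

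Second, I would extend this to all ancestors of any committed block. When an honest replica commits a block $B$ in Step~\ref{smr:step:commit}, it also commits every ancestor of $B$. Since $B$ is certified, $B$ itself is externally valid by the previous paragraph. For its parent $B_{k-1}$, note that before voting for $B$, honest replicas checked that $B$ extends a certified parent block (either because $\fc$ in the proposal is a QC of the parent, or because $B$ is the first block of a view and the accompanying $\fc$ is a valid QC of $B$'s parent). So the parent is certified, hence externally valid; iterating up to the genesis block $B_0$ (which is assumed externally valid and certified) gives external validity of every ancestor.

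Putting the two steps together yields the theorem: any committed block, along with all of its ancestors committed with it, is externally valid. There is essentially no obstacle here beyond bookkeeping; the non-trivial work has already been done in setting up the protocol so that every vote, QC, and TC is predicated on external validity, and the safety/liveness proofs did not need this property, so it falls out directly from the vote rule and the chain structure.
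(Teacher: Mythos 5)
Your proof is correct and rests on exactly the same observation as the paper's one-line argument: honest replicas only vote for externally valid blocks, so any certified (hence any committed) block is externally valid. You simply spell out the quorum-intersection step (every QC contains an honest vote) and the extension to ancestors, which the paper treats as immediate.
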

\begin{proof}
    Since any honest replica only votes for blocks that are externally valid, the claim is trivially true.
\end{proof}

\begin{theorem}[Good-case Latency]
    When the network is synchronous and the leader is honest, the proposal of the leader will be committed within $2$ rounds.
\end{theorem}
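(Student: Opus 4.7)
The plan is to trace the timing of a single proposal by the honest leader, assuming synchrony, and verify two things: (i) all honest replicas will actually vote, and (ii) the proposal is committed within two message-delays.

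First, I would argue that every honest replica votes for the leader's proposal. For any proposal after the first in view $w$, the leader sets $\fc$ to the QC of the previously proposed block (which the leader itself formed in its previous proposal step and which every honest replica has seen by the time the next proposal arrives), so the new block extends the highest certified block known to each honest replica, and the condition of Step~\ref{smr:step:vote} is satisfied. For the first block in view $w$, by Step~\ref{smr:step:status} of view-change the honest leader bundles either a valid TC of view $w-1$ that locks $B$ together with the QC of its parent, or a set of $4f-1$ valid \status messages whose highest TC locks $B$ together with the QC of its parent. Either case matches one of the two sub-cases of Step~\ref{smr:step:vote}, so honest replicas vote. Moreover, since no honest replica votes for any other height-$k$ block (the leader is honest, so no conflicting proposal is broadcast), the ``hasn't voted'' precondition also holds.

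Next, I would pin down the timing. Let $t$ be the time at which the honest leader multicasts $\langle \texttt{propose}, \langle B_k, w\rangle_{L_w}, \fc, \fs \rangle_{L_w}$. Since the network is synchronous, every honest replica receives the proposal by $t+\Delta$ (round one) and immediately multicasts its \vote. All honest \vote messages are then received by every honest replica by $t+2\Delta$ (round two). Because $n \geq 5f-1$, there are at least $4f-1$ honest replicas, so each honest replica collects at least $4f-1$ valid \vote messages for $B_k$ by time $t+2\Delta$, forms a QC, and commits $B_k$ along with all its ancestors in Step~\ref{smr:step:commit}.

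The only subtlety, and what I would treat as the main thing to nail down, is to ensure that the view-change time budget in Step~\ref{smr:step:timeout} does not prematurely cause honest replicas to stop voting during the good case. This is inherited from the argument already given in Theorem~\ref{thm:smr:liveness}: after GST, an honest leader enters the view at most $\Delta$ late, takes $\Delta$ to gather \status messages, and then each proposal is committed within $2\Delta$, which fits well inside the $(2p+2)\Delta$ window for $p$ commits. Hence no honest replica times out during the steady state, and every proposal of the honest leader is committed within $2\Delta$, i.e. within two rounds.
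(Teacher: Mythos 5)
Your proof is correct and follows essentially the same route as the paper's: the paper likewise argues that the honest leader's proposal is voted for by all honest replicas (delegating that fact to the liveness theorem, which you instead re-derive inline) and then commits after the two message delays of propose and vote. Your added care about the $(2p+2)\Delta$ timeout budget is also just the argument already contained in the paper's liveness proof, so there is no substantive difference in approach.
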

\begin{proof}
    Since the leader is honest, it proposes the same block to all honest replicas. Then all honest replicas will vote for the block as proved in Theorem~\ref{thm:smr:liveness}, and therefore commit within $2$ rounds of message exchanges after receiving all votes from the honest replicas.
\end{proof}

\section{Conclusion}

In this paper, we extend the result for partially synchronous validated Byzantine broadcast from our previous paper~\cite{abraham2021goodcase}, to obtain a chain-based BFT SMR protocol with $n\geq 5f-1$ and commit latency of $2$ rounds in the good case.

\section*{Acknowledgement}
We would like to thank Irene Isaac and Tzu-Bin Yan for pointing out a subtle issue for the certificate definition.

\bibliographystyle{ACM-Reference-Format}
\bibliography{references}

\appendix

\end{document}